\newcommand{\Z}{\mathbb{Z}}
\DeclareMathOperator{\sgn}{sgn}
\DeclareMathOperator{\conv}{conv}
\DeclareMathOperator{\len}{len} 
\DeclareMathOperator{\I}{I} 
\DeclareMathOperator{\J}{J} 
\newcommand{\ot}{\coloneqq}
\renewcommand{\le}{\leqslant}
\renewcommand{\ge}{\geqslant}
\renewcommand{\emptyset}{\varnothing}
\renewcommand{\epsilon}{\varepsilon}
\providecommand\given{} 
\newcommand\SetSymbol[1][]{%
	\nonscript\:#1\vert
	\allowbreak \nonscript\:	\mathopen{}}
\DeclarePairedDelimiterX\Set[1]\{\}{%
	\renewcommand\given{\SetSymbol[\delimsize]}	#1}
\newtheorem{lemma}{Лемма}
\theoremstyle{definition}
\newtheorem{definition}{Определение}
\begin{document}

\hypersetup{pdfauthor={A.N. Maksimenko}, pdftitle={Branch and bound algorithm for the traveling salesman problem is not a direct-type algorithm}}
\title{Алгоритм ветвей и границ для задачи коммивояжера\\ не является алгоритмом прямого типа}
\author{А.Н. Максименко\thanks{Работа выполнена в рамках гос. задания на НИР ЯрГУ, шифр	1.5768.2017/П220.}}

\maketitle


\begin{abstract}
В настоящей работе рассматривается понятие линейного разделяющего алгоритма прямого типа, введенное В.А.~Бондаренко в~1983~г.
До недавнего времени считалось, что класс алгоритмов прямого типа является широким и включает в себя многие классические комбинаторные алгоритмы, в~том числе, алгоритм ветвей и границ для задачи коммивояжера, предложенный J.D.C.~Little, K.G.~Murty, D.W.~Sweeney, C.~Karel в 1963~г.
Мы покажем, что этот алгоритм не является алгоритмом прямого типа.
\end{abstract}

\section{Введение}

В 2015--2018 гг. было опубликовано несколько работ~\cite{Bondarenko:2015,Nikolaev:2016,Nikolaev:2017,Bondarenko:2017,Bondarenko:2018}, основными результатами которых являются оценки кликовых чисел графов многогранников, ассоциированных с~различными задачами комбинаторной оптимизации.
Основной мотивацией для таких оценок является следующий тезис:
\foreignlanguage{english}{``It is known that this value characterizes the time complexity in a broad class of algorithms based on linear comparisons''}%
\footnote{<<Известно, что эта величина характеризует сложность по времени в широком классе алгоритмов, основанных на линейных сравнениях>>}~\cite{Bondarenko:2018}.
А именно, речь идет о классе алгоритмов прямого типа, впервые введенном в~\cite{Bondarenko:1983}.
В качестве подтверждения этого тезиса в~\cite{Nikolaev:2016,Nikolaev:2017} говорится о том, что этот класс
включает алгоритмы сортировки, жадный алгоритм, динамическое программирование и~метод ветвей и~границ\footnote{Но ссылки на источник с соответствующими доказательствами не приводятся.}.
Доказательства того, что эти алгоритмы (а также алгоритм Эдмондса для задачи о паросочетаниях)
являются алгоритмами прямого типа, впервые были опубликованы в диссертации~\cite{Bond:1993} (см. также монографию~\cite{BondBook:2008}).
В 2014 г. в~\cite{Maksimenko:2014} было показано, что алгоритм Куна"--~Манкреса для задачи о~назначениях (а~вместе с ним и алгоритм Эдмондса) не принадлежит к этому классу. Там же был описан часто используемый на практике способ модификации алгоритмов, выводящий их из класса алгоритмов прямого типа.
Ниже мы докажем, что классический алгоритм ветвей и~границ для задачи коммивояжера~\cite{Little:1963,Reingold:1980} тоже не принадлежит к этому классу.
Тем~самым будет показано, что
теорема~2.6.3 из диссертации~\cite{Bond:1993} (теорема~3.6.6 из многографии~\cite{BondBook:2008})
не~может быть доказана в оригинальной постановке.
Это позволяет сделать вывод о том, что класс алгоритмов прямого типа не является столь широким, как предполагалось ранее.
 
Текст статьи организован следующим образом. 
В разделе~\ref{sec:Alg} приводится псевдокод классического алгоритма ветвей и границ для задачи коммивояжера.
В разделе~\ref{sec:Direct} вводятся основные понятия концепции алгоритмов прямого типа и два ключевых определения: алгоритма прямого типа и алгоритма <<прямого типа>>.
В разделе~\ref{sec:notDirect1} показано, что классический алгоритм ветвей и границ для задачи коммивояжера не является алгоритмом прямого типа, а в разделе~\ref{sec:notDirect2} "--- что он не является алгоритмом <<прямого типа>>.

%
%

\section{Алгоритм ветвей и границ для задачи коммивояжера}
\label{sec:Alg}

Рассмотрим полный орграф $G = (V, A)$ с множеством вершин $V = [n] = \{1,2,\dots,n\}$ и~дуг $A = \Set{(i,j) \given i,j \in V, \ i \ne j}$. Каждой дуге $(i,j) \in A$ поставлено в соответствие число $c_{ij} \in \Z$, называемое \emph{длиной дуги}. 
\emph{Длиной подмножества} $H \subseteq A$ будем называть суммарную длину входящих в него дуг: $\len(H) = \sum_{(i,j) \in H} c_{ij}$.
Задача коммивояжера состоит в том, чтобы найти $H^* \subseteq A$, являющееся гамильтоновым контуром в $G$ и имеющее минимальную длину $\len(H^*)$.

Для удобства дальнейшего обсуждения поместим числа $c_{ij}$ в матрицу $C = (c_{ij})$. 
Диагональным элементам $c_{ii}$ припишем максимально возможные длины, $c_{ii}  \ot \infty$,
чтобы исключить их влияние на работу алгоритма, и будем предполагать, что $\infty - b = \infty$ для любого числа $b \in \Z$.
Через $\I(M)$ будем обозначать множество индексов строк матрицы $M$,
а~через $\J(M)$ обозначим множество индексов столбцов матрицы $M$.
В начале работы алгоритма $\I(C) = \J(C) = V$.
Через $M(S,T)$ обозначим подматрицу матрицы $M$, лежащую на пересечении строк $S \subseteq \I(M)$ и столбцов $T \subseteq \J(M)$.

Сам алгоритм подробно описан в \cite[раздел 4.1.6]{Reingold:1980} и \cite{Little:1963}.
Мы приводим лишь его псевдокод "--- алгоритм~\ref{alg}.
Отдельно, в алгоритме~\ref{alg:reduction} описан процесс редуцирования строк и столбцов матрицы, а в алгоритме~\ref{alg:arc} "--- способ выбора такого нулевого элемента матрицы, при замене которого на бесконечность сумма редукций матрицы максимальна.

\SetAlgorithmName{Алгоритм}{Список алгоритмов}{} 
\SetAlgoCaptionSeparator{.} 
\DontPrintSemicolon 
\SetKwProg{Proc}{Procedure}{}{end} 
\SetKwProg{Fn}{Function}{}{end} 
\SetKwInOut{Global}{Глобальные}
\SetKwInOut{Input}{Вход}
\SetKwInOut{Output}{Выход}
\SetKwRepeat{DoWhile}{do}{while} 
\SetKwBlock{Loop}{loop}{endloop} 
\SetKwFunction{Reduction}{Reduction}
\SetKwFunction{BranchBound}{BranchBound}
\SetKwFunction{HamiltonianCycle}{HamiltonCycle}
\SetKwFunction{ForbiddenArc}{ForbiddenArc}
\SetKwFunction{ChooseArc}{ChooseArc}
\SetKwArray{Arcs}{Arcs} 
\SetKwArray{Mat}{M} 
\SetKwArray{NewMat}{Mnew} 
\SetKwData{Sum}{sum} 
\SetKwArray{Tour}{Hopt} 
\SetKwData{Value}{lopt} 

\begin{algorithm}
	\caption{Метод ветвей и границ для задачи коммивояжера} 
	\label{alg}
	\Global{ гамильтонов контур \Tour с минимальной длиной; его длина \Value. До начала работы алгоритма $\Value \ot \infty$.}
	\Input{ матрица длин \Mat; множество дуг \Arcs, обязательных для включения в контур; текущая сумма всех редукций \Sum. В~самом начале работы алгоритма $\Mat \ot C$, $\Arcs \ot \emptyset$, $\Sum \ot 0$.}
	\BlankLine
	\Proc{\BranchBound{\Mat, \Arcs, \Sum}}{
		\tcc{Редуцируем матрицу \Mat}
		\Reduction{\Mat, \Sum}\; 
		\If{$\Sum \ge \Value$}{\label{alg:return}
			завершить текущий экземпляр процедуры
		}
		\tcc{Выбираем оптимальный нулевой элемент матрицы \Mat}
		$(i,j) \ot {}$\ChooseArc{\Mat} \label{alg:zero} \;
		\tcc{Разбираем случаи, когда контур содержит дугу $(i,j)$}
		\eIf{$|I| = 3$}{
			\tcc{Находим единственный гамильтонов контур}
			$H \ot {}$\HamiltonianCycle{$\Arcs \cup \{(i,j)\}$}\;
			\If{$\len(H) < \Value$}{
				$\Tour \ot H$\;
				$\Value \ot \len(H)$\;
			}
		}{
			\tcc{Вычеркиваем $i$-ю строку и $j$-й столбец}
			$\NewMat \ot \Mat(\I(\Mat) \setminus \{i\}, \J(\Mat) \setminus \{j\})$\;
			\tcc{Находим запрещенную дугу}
			$(l,k) \ot \ForbiddenArc{\Arcs,(i,j)}$\;
			$\NewMat{l,k} \ot \infty$\;
			\BranchBound{\NewMat, $\Arcs \cup \{(i,j)\}$, \Sum} \label{alg:bb2} \;
		}
		\tcc{Разбираем случаи, когда контур не содержит дугу $(i,j)$}
		$\Mat{i,j} \ot \infty$\;
		\BranchBound{\Mat, \Arcs, \Sum} \label{alg:bb3} \;
	}
	\BlankLine
	\Fn{\HamiltonianCycle{\Arcs}}{
		Найти гамильтонов контур, содержащий все дуги из \Arcs.\;
	}	
	\BlankLine
	\Fn{\ForbiddenArc{\Arcs,$(i,j)$}}{
		Найти пару вершин $l$ и $k$, являющихся концом и началом наибольшего (по включению) пути в \Arcs, содержащего $(i,j)$.
	}	
\end{algorithm}

\begin{algorithm}
	\caption{Редуцирование строк и столбцов матрицы} 
	\label{alg:reduction}
	\Input{ матрица \Mat; текущая сумма всех редукций \Sum.}
	\Output{ редуцированная матрица \Mat; измененная \Sum.}
	\BlankLine
	\Proc{\Reduction{\Mat, \Sum}}{
		\tcc{Редуцируем строки матрицы \Mat}
		\For{$i \in \I(\Mat)$}{
			$m \ot \infty$\;
			\tcc{Находим $m = m(i) = \min_{j \in \J(\Mat)} \Mat{i,j}$}
			\For{$j \in \J(\Mat)$}{
				\lIf{$m > \Mat{i,j}$}{$m \ot \Mat{i,j}$} \label{alg:min}
			}	
			$\Sum \ot \Sum + m$\;
			\lFor{$j \in \J(\Mat)$}{$\Mat{i,j} \ot \Mat{i,j} - m$}
		}	
		\tcc{Редуцируем столбцы матрицы \Mat}
		\For{$j \in \J(\Mat)$}{
			$m \ot \infty$\;
			\For{$i \in \I(\Mat)$}{
				\lIf{$m > \Mat{i,j}$}{$m \ot \Mat{i,j}$}
			}	
			$\Sum \ot \Sum + m$\;
			\lFor{$i \in \I(\Mat)$}{$\Mat{i,j} \ot \Mat{i,j} - m$}	
		}	
	}
\end{algorithm}

\begin{algorithm}
	\caption{Выбор дуги} 
	\label{alg:arc}
	\Input{ матрица \Mat.}
	\Output{ дуга $(i^*,j^*)$, при запрещении которой нижняя оценка длины гамильтонова контура максимальна.}
	\BlankLine
	\Fn{\ChooseArc{\Mat}}{
		$w \ot -1$\;
		\For{$i \in \I(\Mat)$}{
			\For{$j \in \J(\Mat)$}{
				\If{$\Mat{i,j} = 0$}{
					$m \ot \infty$\;
					\tcc{Находим $m = \min_t \Mat{i,t}$}
					\For{$t \in \J(\Mat) \setminus \{j\}$}{
						\lIf{$m > \Mat{i,t}$}{$m \ot \Mat{i,t}$} \label{alg:imin}
					}	
					$k \ot \infty$\;
					\tcc{Находим $k = \min_t \Mat{t,j}$}
					\For{$t \in \I(\Mat) \setminus \{i\}$}{
						\lIf{$k > \Mat{t,j}$}{$k \ot \Mat{t,j}$} \label{alg:jmin}
					}	
					\tcc{Сравниваем $m+k$ с текущим рекордом $w$}
					\If{$m + k > w$\label{alg:mkw}}{
						$w \ot m + k$\;
						$(i^*,j^*) \ot (i,j)$\;
					}	
				}
			}	
		}
	}
\end{algorithm}

%
%

\section{Алгоритмы прямого типа}
\label{sec:Direct}

При изложении основ теории алгоритмов прямого типа мы будем придерживаться~\cite{Bond:1993} (см. также~\cite{BondBook:2008}).

С целью унификации изложения матрица длин дуг $C$ далее будет называться \emph{вектором}\footnote{Элементы матрицы всегда можно выписать в строку или столбец.} \emph{входных данных} или просто \emph{входом}. Решение задачи коммивояжера, т.е. гамильтонов контур $H \subseteq A$, будет представляться в виде 0/1"=вектора $\bm{x} = (x_{ij})$, имеющего ту же размерность, что и $C$. Координаты этого вектора $x_{ij} = 1$, при $(i,j) \in H$, и $x_{ij} = 0$ иначе. Через $X$ обозначаем множество всех 0/1"=векторов $\bm{x}$, соответствующих гамильтоновым контурам в рассматриваемом орграфе $G$. Таким образом, при фиксированном входе $C$ задача коммивояжера состоит в поиске решения $\bm{x^*} \in X$ такого, что $\langle \bm{x^*}, C\rangle \le \langle \bm{x}, C\rangle$ $\forall \bm{x} \in X$.
Далее будем называть такое решение $\bm{x^*}$ \emph{оптимальным относительно входа $C$}.
Следуя~\cite[определение 1.1.2]{Bond:1993}, совокупность всех таких оптимизационных задач, образованную фиксированным множеством допустимых решений $X$ (в случае задачи коммивояжера, $X$ однозначно определяется числом вершин орграфа $G$) и всевозможными входными векторами $C$, будем называть \emph{задачей $X$}.
Два допустимых решения $\bm{x}, \bm{y} \in X$ задачи $X$ называются \emph{смежными}, если найдется вектор $C$ такой, что они, и только они, являются оптимальными относительно~$C$.
Подмножество $Y \subseteq X$ называется \emph{кликой}, если любая пара $\bm{x}, \bm{y} \in Y$ смежна.

Выпуклая оболочка $\conv(X)$ называется \emph{многогранником задачи $X$}. Так как $X$ в задаче коммивояжера является подмножеством вершин единичного куба, то $X$ совпадает с множеством вершин многогранника $\conv(X)$.
В этой терминологии два решения $\bm{x}, \bm{y} \in X$ смежны тогда и только тогда, когда смежны соответствующие вершины многогранника $\conv(X)$~\cite{Bond:1993}.
Известно~\cite{Padberg:1974}, что все вершины многогранника коммивояжера попарно смежны при $n < 6$, где $n$ "--- число вершин орграфа $G$, в котором требуется найти оптимальный гамильтонов контур.

Алгоритмы прямого типа относятся к классу линейных разделяющих алгоритмов, которые удобно представлять в виде линейных разделяющих деревьев.

\begin{definition}[{{\cite[определение 1.3.1]{Bond:1993}}}] 
	\emph{Линейным  разделяющим деревом} задачи $X \subset \Z^m$ называется ориентированное дерево, обладающее следующими свойствами:
	\begin{itemize}
		\item[а)] 
		в каждый узел, за исключением одного, называемого  корнем,
		входит ровно одна дуга; дуг, входящих в корень, нет;
		\item[б)] 
		для каждого узла либо имеется две выходящих из него  дуги,
		либо таких дуг  нет  вообще;   в  первом  случае  узел  называется
		внутренним, во втором "--- внешним, или листом;
		\item[в)] 
		каждому внутреннему узлу соответствует некоторый вектор $B \in \Z^m$;
		\item[г)] 
		каждому листу соответствует некоторый элемент из $X$ (нескольким листьям может соответствовать один и тот же элемент множества $X$);
		\item[д)] 
		каждой дуге $d$ соответствует число  $\sgn d$, равное $1$ либо $-1$;
		две дуги, выходящие из одного узла, имеют различные значения;
		\item[е)] 
		для каждой цепи $W = B_1 d_1 B_2 d_2 \ldots B_k d_k \bm{x}$, соединяющей корень и лист (в обозначении  цепи  перечислены  соответствующие  ее  узлам векторы $B_i$; дуга $d_i$ выходит из узла $B_i$, $i\in[k]$), и для любого входа $C$ из неравенств $\langle B_i, C \rangle \sgn d_i \ge 0$, $i\in[k]$, следует, что решение $\bm{x}$ является оптимальным относительно $C$.
	\end{itemize}
\end{definition}

Таким образом, в рамках теории линейных разделяющих алгоритмов внимание уделяется только тем операциям, где выполняется проверка условий вида $\langle B, C \rangle \ge 0$, где $C$ "--- вектор входных данных. 
Так, например, в строке~\ref{alg:min} алгоритма~\ref{alg:reduction} на самом первом шаге цикла проверяется неравенство $\infty > C_{11}$; на втором шаге проверяется условие $C_{11} > C_{12}$, и~т.\,д.
А в функциях \HamiltonianCycle и \ForbiddenArc, с точки зрения линейных разделяющих алгоритмов, не происходит ничего интересного, так как не выполняются никакие сравнения с элементами вектора входных данных.

Процесс работы линейного разделяющего алгоритма для фиксированного вектора входных данных $C$ представляет собой некоторую цепь $B_1 d_1 B_2 d_2 \ldots B_m d_m \bm{x}$, соединяющую корень $B_1$ и некоторый лист~$\bm{x}$ соответствующего линейного разделяющего дерева.
Листом в нашем случае является гамильтонов контур (точнее, его характеристический вектор), являющийся оптимальным относительно~$C$.

Пусть $B$ "--- некоторый внутренний  узел в линейном разделяющем дереве рассматриваемого алгоритма,
а $X$ "--- множество всех допустимых решений (множество меток всех листьев).
Обозначим через $X_B$, $X_B \subseteq X$,  множество меток всех листьев этого дерева,
которым предшествует  узел $B$, а через $X_B^+$ и $X_B^-$  обозначим
подмножества множества $X_B$, соответствующие  двум выходящим из $B$ дугам.
Очевидно, $X_B = X_B^+ \cup X_B^-$.
Обозначим через $R_B^- = X_B^+ \setminus X_B^-$ множество меток,
отбрасываемых при переходе по <<отрицательной>> дуге.
По аналогии определим множество меток $R_B^+ = X_B^- \setminus X_B^+$,
отбрасываемых при переходе по <<положительной>> дуге.

\begin{definition}[{{\cite[определение 1.4.2]{Bond:1993}}}] 
	\label{def:direct-type}
	Линейное разделяющее дерево называется деревом \emph{прямого типа}, если для любого внутреннего узла $B$ и~для любой клики $Y \subseteq X$ выполняется неравенство
	\begin{equation}
	\min \{ |R_B^+ \cap Y|, |R_B^- \cap Y| \} \le 1.
	\label{eq:direct-type}
	\end{equation}
\end{definition}

Непосредственно из определения следует, что высота дерева прямого типа (то есть число сравнений, используемых алгоритмом в худшем случае) для задачи $X$ не может быть меньше, чем $\omega(X) - 1$, где $\omega(X)$ "--- кликовое число множества $X$~\cite[теорема 1.4.3]{Bond:1993}.

Если же мы хотим доказать, что некий алгоритм не является алгоритмом прямого типа, достаточно указать клику $Y$, состоящую из четырех решений, и узел $B$ такие, что $|R_B^+ \cap Y| = |R_B^- \cap Y| = 2$.

Для каждого $\bm{x} \in X$ определим \emph{конус исходных данных}
\begin{equation*}
K(\bm{x}) = \Set*{C \given  \langle \bm{x}, C\rangle \le \langle \bm{y}, C\rangle, \ \forall \bm{y} \in X}.
\end{equation*}
Т.\,е. $K(\bm{x})$ состоит из всех векторов $C$ таких, что $\bm{x}$ оптимален относительно $C$.

\begin{definition}[{{\cite[определение 1.4.4]{Bond:1993}}}] 
	\label{def:direct-type2}
	Линейное разделяющее дерево называется деревом \emph{<<прямого типа>>}, если 
	каждая цепь $B_1 d_1 B_2 d_2 \ldots B_k d_k \bm{x}$,  
	соединяющая  корень  и  лист, удовлетворяет условиям:
	\begin{itemize}
		\item[(*)] 
		для любого $\bm{y}\in X$, смежного с $\bm{x}$, найдется такой номер $i\in [k]$, 
		что условия $\langle B_i, C \rangle \sgn d_i > 0$ и $C\in K(\bm{y})$ несовместны;
		\item[(**)] 
		для любого $i\in[k]$ из несовместности условий
		\[
		\langle B_i, C \rangle \sgn d_i > 0 \qquad \mbox{и} \qquad C\in K(\bm{y})
		\]
		для $\bm{y}$, смежного с $\bm{x}$, и из телесности конуса
		\[
		K(\bm{x}) \cap \Set*{ C \given \langle B_i, C \rangle \sgn d_i \le 0}
		\]
		следует, что ветвь, начинающаяся в узле $B_i$ с дугой $-d_i$,  имеет хотя бы один лист, помеченный $\bm{x}$.
	\end{itemize}
\end{definition}

Деревья <<прямого типа>> с деревьями прямого типа объединяет тот факт, что их высота тоже ограничена снизу величиной $\omega(X) - 1$~\cite[теорема~1.4.5]{Bond:1993}.

Чтобы доказать, что алгоритм~\ref{alg} не является алгоритмом <<прямого типа>>, мы ограничимся проверкой условия (*) из этого определения. А~именно, мы укажем вполне конкретный входной вектор $C^*$, который однозначно определит некоторую цепь $B_1 d_1 B_2 d_2 \ldots B_k d_k \bm{x}$.
Далее будет выбран $\bm{y}\in X$, смежный с $\bm{x}$, для которого условия $\langle B_i, C \rangle \sgn d_i > 0$ и $C\in K(\bm{y})$ совместны при любом $i\in [k]$. Обратим особое внимание на то, что нам нужно будет проверить совместность условий $\langle B_i, C \rangle \sgn d_i > 0$ и $C \in K(\bm{y})$ отдельно для каждого $i\in [k]$, вне зависимости от результатов других сравнений.
То есть для каждого $i \in [k]$ достаточно указать $C_i$ такой, что $\langle B_i, C_i \rangle \sgn d_i > 0$ и $C_i \in K(\bm{y})$.

%
%

\section{Алгоритм~\ref{alg} не является прямым}
\label{sec:notDirect1}

Рассмотрим задачу коммивояжера в полном орграфе на 5 вершинах.
Множество допустимых решений $X$ такой задачи состоит из двадцати четырех 0/1-векторов, соответствующих гамильтоновым контурам в этом орграфе. Все 24 решения попарно смежны~\cite{Padberg:1974}.

Предположим, что элементы матрицы длин дуг $C \in \Z^{5\times 5}$ удовлетворяют следующим условиям:
\begin{equation}
\label{eq:cond}
\begin{aligned}
c_{12} &\le c_{13},  &  c_{12} &\le c_{14},  &  c_{12} &\le c_{15}, \\ 
c_{21} &\le c_{23},  &  c_{21} &\le c_{24},  &  c_{21} &\le c_{25}, \\ 
c_{31} &> c_{32},    &  c_{32} &> c_{34},    &  c_{34} &> c_{35}. \\ 
\end{aligned}
\end{equation}
В самом начале работы рассматриваемого алгоритма выполняется процедура редуцирования этой матрицы (алгоритм~\ref{alg:reduction}).
Мы ограничимся рассмотрением этапа редуцирования строк.
В результате последовательных сравнений в первой строке выбирается наименьший элемент
(в данном случае $c_{12}$) и вычитается из всех её элементов.
Далее выбирается минимальный элемент во второй строке, им оказывается $c_{21}$,
и минимальный элемент в третьей строке "--- $c_{35}$.
После этого алгоритм переходит к проверке неравенства
\begin{equation}
\label{eq:B}
c_{41} > c_{42}
\end{equation}
(сравнение $\infty > c_{41}$ присутствует в алгоритме исключительно для краткости описания и не несет никакой информации).
Соответствующий узел линейного разделяющего дерева алгоритма обозначим $B$.
Ясно, что алгоритм попадает в этот узел дерева, если, и только если для входного вектора $C$ выполняются условия~\eqref{eq:cond}.

Рассмотрим характеристические вектора четырех гамильтоновых контуров:
\[
\begin{aligned}
\bm{x} &= 
\begin{pmatrix}
   & 0 & 0 & 1 & 0 \\
 0 &   & 0 & 0 & 1 \\
 1 & 0 &   & 0 & 0 \\
 0 & 1 & 0 &   & 0 \\
 0 & 0 & 1 & 0 &  \\
\end{pmatrix},
&
\bm{y} &= 
\begin{pmatrix}
   & 0 & 0 & 0 & 1 \\
 0 &   & 1 & 0 & 0 \\
 1 & 0 &   & 0 & 0 \\
 0 & 1 & 0 &   & 0 \\
 0 & 0 & 0 & 1 &  \\
\end{pmatrix},
\\
\bm{z} &= 
\begin{pmatrix}
   & 0 & 1 & 0 & 0 \\
 0 &   & 0 & 0 & 1 \\
 0 & 1 &   & 0 & 0 \\
 1 & 0 & 0 &   & 0 \\
 0 & 0 & 0 & 1 &  \\
\end{pmatrix},
&
\bm{w} &= 
\begin{pmatrix}
   & 0 & 0 & 0 & 1 \\
 0 &   & 0 & 1 & 0 \\
 0 & 1 &   & 0 & 0 \\
 1 & 0 & 0 &   & 0 \\
 0 & 0 & 1 & 0 &  \\
\end{pmatrix}.
\end{aligned}
\]
Нетрудно проверить, что входные векторы
\[
\begin{aligned}
C_x &= 
\begin{pmatrix}
  & 0 & 6 & 1 & 6 \\
0 &   & 6 & 6 & 1 \\
3 & 2 &   & 1 & 0 \\
6 & 0 & 6 &   & 6 \\
6 & 6 & 0 & 6 &  \\
\end{pmatrix},
&
C_y &= 
\begin{pmatrix}
& 0 & 6 & 6 & 1 \\
0 &   & 1 & 6 & 6 \\
3 & 2 &   & 1 & 0 \\
6 & 0 & 6 &   & 6 \\
6 & 6 & 6 & 0 &  \\
\end{pmatrix},
\\
C_z &= 
\begin{pmatrix}
& 0 & 1 & 6 & 6 \\
0 &   & 6 & 6 & 1 \\
6 & 3 &   & 1 & 0 \\
0 & 6 & 6 &   & 6 \\
6 & 6 & 6 & 0 &  \\
\end{pmatrix},
&
C_w &= 
\begin{pmatrix}
& 0 & 6 & 6 & 1 \\
0 &   & 6 & 1 & 6 \\
6 & 3 &   & 1 & 0 \\
0 & 6 & 6 &   & 6 \\
6 & 6 & 0 & 6 &  \\
\end{pmatrix}
\end{aligned}
\]
удовлетворяют условиям~\eqref{eq:cond}, а для каждого $\bm{t} \in \{\bm{x}, \bm{y}, \bm{z}, \bm{w}\}$ и для любого $\bm{s} \in X \setminus \{\bm{t}\}$ выполняется неравенство $\langle \bm{t}, C_{t}\rangle = 5 < \langle \bm{s}, C_{t}\rangle$.
Следовательно, все четыре вектора входят в множество меток $X_B$ всех листьев дерева алгоритма, которым предшествует узел $B$.

Покажем, что $\bm{z}$ и $\bm{w}$ входят в множество меток $R_B^+$, отбрасываемых при выполнении неравенства~\eqref{eq:B}, а $\bm{x}$ и $\bm{y}$ входят в множество меток $R_B^-$, отбрасываемых при невыполнении неравенства~\eqref{eq:B}.

Предположим, что для входной матрицы $C$ выполнены условия~\eqref{eq:cond} и неравенство~\eqref{eq:B}. 
Тогда $\langle \bm{z}, C\rangle > \langle \bm{z'}, C\rangle$ для
\[
\bm{z'} = 
\begin{pmatrix}
  & 0 & 1 & 0 & 0 \\
1 &   & 0 & 0 & 0 \\
0 & 0 &   & 0 & 1 \\
0 & 1 & 0 &   & 0 \\
0 & 0 & 0 & 1 &  \\
\end{pmatrix}.
\]
Аналогично, $\langle \bm{w}, C\rangle > \langle \bm{w'}, C\rangle$ для
\[
\bm{w'} = 
\begin{pmatrix}
  & 0 & 0 & 0 & 1 \\
1 &   & 0 & 0 & 0 \\
0 & 0 &   & 1 & 0 \\
0 & 1 & 0 &   & 0 \\
0 & 0 & 1 & 0 &  \\
\end{pmatrix}.
\]
Таким образом, $\bm{z}, \bm{w} \in R_B^+$.

Предположим, что для $C$ выполнены условия~\eqref{eq:cond}, но не выполнено неравенство~\eqref{eq:B}. 
Тогда $\langle \bm{x}, C\rangle > \langle \bm{x'}, C\rangle$ для
\[
\bm{x'} = 
\begin{pmatrix}
  & 1 & 0 & 0 & 0 \\
0 &   & 0 & 0 & 1 \\
0 & 0 &   & 1 & 0 \\
1 & 0 & 0 &   & 0 \\
0 & 0 & 1 & 0 &  \\
\end{pmatrix},
\]
и $\langle \bm{y}, C\rangle > \langle \bm{y'}, C\rangle$ для
\[
\bm{y'} = 
\begin{pmatrix}
  & 1 & 0 & 0 & 0 \\
0 &   & 1 & 0 & 0 \\
0 & 0 &   & 0 & 1 \\
1 & 0 & 0 &   & 0 \\
0 & 0 & 0 & 1 &  \\
\end{pmatrix}.
\]
следовательно, $\bm{z}, \bm{w} \in R_B^+$.

Таким образом, условие~\eqref{eq:direct-type} для данного узла $B$ не выполнено, и алгоритм~\ref{alg} не является алгоритмом прямого типа.


%
%

\section{Алгоритм~\ref{alg} не является <<прямым>>}
\label{sec:notDirect2}

При анализе алгоритма~\ref{alg}, как линейного разделяющего дерева, нам будут встречаться только неравенства следующего вида:
\begin{equation}
\label{eq:BC}
\langle B^+, C \rangle - \langle B^-, C \rangle > 0,
\end{equation}
где $C \in \Z^{n^2}$ "--- вектор входных данных,
\begin{equation}
\label{eq:BC2}
B^+, B^- \in \{0,1\}^{n^2}, \quad
\langle B^+, B^- \rangle = 0 \quad \text{и} \quad
\langle B^+, \bm{1} \rangle = \langle B^-, \bm{1} \rangle > 0,
\end{equation}
$\bm{1}$ "--- вектор из единиц.
Иными словами, условие~\eqref{eq:BC2} означает, что множества единичных координат для $B^+$ и $B^-$ равномощны и не пересекаются.
Для каждого такого неравенства и для некоторого допустимого решения $\bm{y} \in X \subset \{0,1\}^{n^2}$ нам нужно будет проверить, что существует $C \in K(\bm{y})$, для которого это неравенство выполнено.
Такой анализ существенно упрощается, если воспользоваться следующим критерием.

\begin{lemma}
\label{lem:1}
Пусть $\bm{y} \in \{0,1\}^{n^2}$ "--- характеристический вектор некоторого гамильтонова контура в полном орграфе $G = ([n], A)$. Если выполняются условия~\eqref{eq:BC2} и~$\langle B^+, \bm{y} \rangle \le 2$, то неравенство~\eqref{eq:BC} и условие $C \in K(\bm{y})$ совместны.
\end{lemma}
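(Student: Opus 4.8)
The plan is to verify compatibility constructively: for the given cycle $\bm{y}$ I will write down an explicit integer cost vector $C$ lying in $K(\bm{y})$ and satisfying the strict inequality~\eqref{eq:BC}. Denote by $E$ the arc set of $\bm{y}$ (the support of $\bm{y}$), and let $T$ and $S$ be the supports of $B^+$ and $B^-$. By~\eqref{eq:BC2} the sets $T,S$ are disjoint and $|T|=|S|=p>0$, while the hypothesis $\langle B^+,\bm{y}\rangle\le 2$ says exactly $|T\cap E|\le 2$. Since~\eqref{eq:BC} is simply $\sum_{(i,j)\in T}c_{ij}>\sum_{(i,j)\in S}c_{ij}$, the task is to keep $\bm{y}$ optimal while making the arcs of $T$ collectively dearer than the arcs of $S$.

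The guiding principle is that $\bm{y}$ becomes optimal as soon as its own arcs are cheap and the rest are expensive: if every arc of $E$ costs $0$ and every arc outside $E$ costs a large $M$, then $\langle\bm{y},C\rangle=0$ while any other cycle uses an arc outside $E$ and costs at least $M$. Two perturbations are always safe: raising the cost of an arc outside $E$ (this can only help $\bm{y}$), and lowering the cost of an arc of $E$ (this decreases $\langle\bm{y},C\rangle$ by at least as much as it decreases the cost of any competing cycle). In the \emph{main case}, namely when $T\not\subseteq E$ or $S\cap E\ne\emptyset$, these safe moves already suffice. I would set the cost to $M^2$ on the arcs of $T\setminus E$, to $-M$ on the arcs of $S\cap E$, to $1$ on the arcs of $S\setminus E$, to $M$ on all other arcs outside $E$, and to $0$ on all remaining arcs of $E$. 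Every arc outside $E$ then costs at least $1$ and the only negative entries sit on arcs of $E$, so $\bm{y}$ stays optimal; moreover $\langle B^+-B^-,C\rangle=M^2|T\setminus E|+M|S\cap E|-|S\setminus E|$, which is positive for large $M$ precisely because $|T\setminus E|\ge 1$ or $|S\cap E|\ge 1$.

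The delicate case is $T\subseteq E$ together with $S\cap E=\emptyset$. Here $|T\cap E|=|T|=p$, so the hypothesis forces $p=|S|\le 2$: both $T$ and $S$ consist of at most two arcs, all of $T$ on the cycle and all of $S$ off it. With the arcs of $E$ at cost $0$ the sum over $T$ is $0$, and no single safe move makes it exceed $\sum_S$; I must push the (at most two) arcs of $S$ below $0$, which is a priori dangerous. The fact that rescues this case is that two distinct directed Hamiltonian cycles differ in at least three arcs. Indeed, encode a cycle by the cyclic permutation sending each vertex to its successor; two such permutations that agree in all but one position agree everywhere, and an $n$-cycle times a transposition of two of its points splits into two cycles and so is never an $n$-cycle, which also rules out agreement in all but two positions. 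Hence every cycle other than $\bm{y}$ uses at least three arcs outside $E$, whereas $|S|\le 2$, so any competitor must use an arc outside $E\cup S$. I would therefore assign cost $-1$ to the arcs of $S$, cost $M$ to every other arc outside $E$, and cost $0$ to the arcs of $E$. Then $\langle\bm{y},C\rangle=0$, while any other cycle costs at least $M-2\ge 0$, since at most two of its arcs (those in $S$) cost $-1$ and at least one costs $M$; thus $\bm{y}\in K(\bm{y})$, and $\langle B^+-B^-,C\rangle=0-(-p)=p>0$.

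I expect the genuine obstacle to be exactly this degenerate case, and within it the reliance on the ``differ in at least three arcs'' fact: it is what makes the bound $\langle B^+,\bm{y}\rangle\le 2$ just strong enough to keep $|S|\le 2$ and thereby forbid a competing cycle built solely from the cheapened arcs of $S$. The rest is routine: both explicit cost vectors are integral, and since both~\eqref{eq:BC} and membership in the cone $K(\bm{y})$ are invariant under multiplication by a positive scalar, compatibility over $\Z^{n^2}$ is immediate.
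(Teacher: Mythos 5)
Your proof is correct, and it rests on the same two pillars as the paper's: an explicit integral cost vector that makes the arcs of $\bm{y}$ cheap and everything else expensive, plus the observation that two distinct Hamiltonian circuits share at most $n-3$ arcs (equivalently, a circuit is determined by any $n-2$ of its arcs), which is exactly how the hypothesis $\langle B^+,\bm{y}\rangle\le 2$ earns its keep. The execution differs, though. You split into two cases according to whether $\operatorname{supp}(B^+)\subseteq E$ and $\operatorname{supp}(B^-)\cap E=\emptyset$, and you use a large parameter $M$ with different constructions in each case; a side benefit is that your argument makes visible that the hypothesis $\langle B^+,\bm{y}\rangle\le 2$ is only needed in the degenerate case, where it forces $|\operatorname{supp}(B^-)|\le 2$ so that no competing circuit can live entirely on the cheapened arcs. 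The paper instead gives a single uniform construction with small constants: it sets $C=\bm{4}-B^-$ and then zeroes the entries on the arcs of $\bm{y}$ lying outside $\operatorname{supp}(B^+)$ (there are at least $n-2$ of them), so that $\langle B^+,C\rangle=4p$ while $\langle B^-,C\rangle\le 3p$, and $\langle\bm{y},C\rangle\le 8<9\le\langle\bm{x},C\rangle$ for every competitor $\bm{x}$, since $\bm{x}$ must use at least three arcs of cost at least $3$. That version avoids both the case analysis and the asymptotics in $M$, at the price of hiding where the hypothesis is genuinely indispensable; your version is longer but slightly more informative about the structure of the bound.
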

\begin{proof}
Пусть
\[
S = \Set{(i,j) \in [n]^2 \given y_{ij} = 1 \text{ и } B^+_{ij} = 0}.
\]
Из условия $\langle B^+, \bm{y} \rangle \le 2$ следует, что $|S| \ge n-2$.
Положим 
\[
C \ot \bm{4} - B^-
\]
и, после этого, 
$C_{ij} \ot 0$ для $(i,j) \in S$.
Тогда $\langle B^+, C \rangle = \langle B^+, \bm{4} - B^- \rangle = \langle B^+, \bm{4} \rangle$
и~$\langle B^-, C \rangle \le \langle B^-, \bm{4} - B^- \rangle = \langle B^+, \bm{4} \rangle - \langle B^-, B^- \rangle$ (так как $B^+$ и $B^-$ удовлетворяют условиям~\eqref{eq:BC2}).
Следовательно, неравенство~\eqref{eq:BC} для такого $C$ будет выполнено.

Покажем теперь, что $\langle \bm{y}, C \rangle < \langle \bm{x}, C \rangle$ для любого $\bm{x} \in X \setminus \bm{y}$.

Очевидно, $\langle \bm{y}, C \rangle = (n-|S|) 4 \le 8$.

Пусть $\bm{x} \in X$. Заметим, что если $\langle \bm{y}, \bm{x} \rangle \ge n-2$, то $\bm{x} = \bm{y}$, 
так как любой гамильтонов контур в орграфе на $n$ вершинах однозначно определяется по любым своим $n-2$ дугам.
Следовательно, $\langle \bm{x}, C \rangle \ge 3 \cdot 3 = 9$ для любого $\bm{x} \in X \setminus \bm{y}$.
\end{proof}

В частности, условия леммы выполнены, если в $B^+$ не более двух единиц.

Итак, положим $n = 4$ и рассмотрим следующий вектор входных данных (вместо бесконечности будем подставлять пробел):
\begin{equation}
\label{eq:C}
C^* \ot
\begin{pmatrix}
   & 0 & 2 & 1 \\
 2 &   & 0 & 2 \\
 1 & 2 &   & 0 \\
 0 & 1 & 2 &   
\end{pmatrix}
\end{equation}
Ясно, что единственным оптимальным решением будет вектор
\[
\bm{x} \ot
\begin{pmatrix}
  & 1 & 0 & 0 \\
0 &   & 1 & 0 \\
0 & 0 &   & 1 \\
1 & 0 & 0 &   
\end{pmatrix}
\]
и соответствующий ему контур $\{(1,2), (2,3), (3,4), (4,1)\}$.
Нетрудно проверяется, что множество всех допустимых решений $X$ состоит из 6 попарно смежных векторов.
Положим
\[
\bm{y} \ot
\begin{pmatrix}
  & 0 & 0 & 1 \\
0 &   & 1 & 0 \\
1 & 0 &   & 0 \\
0 & 1 & 0 &   
\end{pmatrix}
\]
Обратим внимание, что $\bm{y}$ является вторым (после $\bm{x}$) по оптимальности относительно $C^*$.
Именно это обстоятельство во многом упрощает дальнейшую проверку соответствующих сравнений.

В целом схема работы алгоритма при заданном входе $C^*$ изображена на рис.~\ref{fig:scheme}.
\begin{figure}
\begin{tikzpicture}[yscale=0.6, >=stealth']
\node[rectangle, draw, right] (B1) at (0, 0) 
	{\textbf{1.} \BranchBound{$C^*$, $\emptyset$, 0}};
\draw (B1.north west) -- (0,-10.5) -- +(1,0);
\draw (0,-1) node[right] {\Reduction{$C^*$, \Sum}}
      (0,-2) node[right] {. . . . . .}
      (0,-3) node[right] {\BranchBound{}}
      (0,-9) node[right] {$\Mat{2,3} \ot \infty$}
      (0,-10) node[right] {\BranchBound{}};
\draw[->]  (2.8,-3) -- +(0.8,0);
\begin{scope}[xshift=3.8cm, yshift=-3cm]
\node[rectangle, draw, right] (B2) at (0, 0) 
	{\textbf{2.} \BranchBound{$C'$, $(2,3)$, 0}};
\draw (B2.north west) -- (0,-5.5) -- +(1,0);
\draw (0,-1) node[right] {. . . . . .}
	  (0,-2) node[right] {$H \ot {}$\HamiltonianCycle{}}
	  (0,-3) node[right] {. . . . . .}
	  (0,-4) node[right] {$\Mat{1,2} \ot \infty$}
	  (0,-5) node[right] {\BranchBound{}};
\draw[->]  (2.8,-5) -- +(0.8,0);
\begin{scope}[xshift=3.8cm, yshift=-5cm]
\node[rectangle, draw, right] (B3) at (0, 0) 
	{\textbf{3.} \BranchBound{$C''$, $(2,3)$, 0}};
\draw (B3.north west) -- (0,-3.5) -- +(1,0);
\draw (0,-1) node[right] {\Reduction{$C''$, \Sum}}
	  (0,-2) node[right] {\lIf{$\Sum \ge \Value$}{}}
	  (0.5,-3) node[right] {завершить процедуру};
\end{scope}
\end{scope}
\draw[->, rounded corners = 3mm] (7.4,-11) -- ++(-0.5, 0) -- ++(0, 2) -- +(-4.1,0);
\draw[->, rounded corners = 3mm]  (2.8,-10) -- ++(1.5,0) -- +(0,-0.7);
\begin{scope}[xshift=2.0cm, yshift=-11.5cm]
\node[rectangle, draw, right] (B4) at (0, 0) 
	{\textbf{4.} \BranchBound{$C'''$, $\emptyset$, 0}};
\draw (B4.north west) -- (0,-3.5) -- +(1,0);
\draw (0,-1) node[right] {\Reduction{$C'''$, \Sum}}
(0,-2) node[right] {\lIf{$\Sum \ge \Value$}{}}
(0.5,-3) node[right] {завершить процедуру};
\end{scope}
\end{tikzpicture}
\caption{Общая схема работы алгоритма~\protect\ref{alg} при входе, задаваемом формулой~\protect\eqref{eq:C}}
\label{fig:scheme}
\end{figure}

Рассмотрим, прежде всего, какие неравенства проверяются при первом входе в процедуру \BranchBound с входом $C^*$. При редуцировании первой строки матрицы~$C^*$ (строка~\ref{alg:min} алгоритма~\ref{alg:reduction}) проверяются (и выполняются) неравенства $\infty > C_{12}$, $C_{13} > C_{12}$ и $C_{14} > C_{12}$. 
Далее мы не будем рассматривать неравенства, в которых сумма (либо разность) элементов исходной матрицы сравнивается с бесконечностью, так как они всегда выполняются и совместны с любым допустимым решением.
Заметим, что только что перечисленные неравенства удовлетворяют условиям леммы~\ref{lem:1}, так как $\langle B^+, \bm{1}\rangle = 1$.
А значит, они совместны с условием $C \in K(\bm{y})$.

После редуцирования первой строки в её ячейках $\Mat[1,j]$, $j\in[4]$, содержатся разности $C_{1j} - C_{12}$, а переменная \Sum принимает значение $C_{12}$.

При редуцировании второй строки проверяются неравенства  $C_{21} > C_{23}$ и $C_{24} > C_{23}$.
Согласно лемме~\ref{lem:1}, они совместны с условием $C \in K(\bm{y})$.

После редуцирования второй строки в её ячейках $\Mat[2,j]$, $j\in[4]$, содержатся разности $C_{2j} - C_{23}$, а переменная \Sum принимает значение $C_{12} + C_{23}$.

При редуцировании последних двух строк ситуация полностью аналогична.
После завершения редуцирования строк
\[
\Sum =  C_{12} + C_{23} + C_{34} + C_{41},
\]
\[
\Mat =
\begin{pmatrix}
            &           0 & C_{13} - C_{12} & C_{14} - C_{12} \\
C_{21} - C_{23} &             &           0 & C_{24} - C_{23} \\
C_{31} - C_{34} & C_{32} - C_{34} &             & 0 \\
          0 & C_{42} - C_{41} & C_{43} - C_{41} &   
\end{pmatrix}
\]

Далее, при редуцировании первого столбца проверяются неравенства $\Mat[2,1] > \Mat[3,1]$ и $\Mat[3,1] > \Mat[4,1]$.
Нам известно, что 
$\Mat[2,1] = C_{21} - C_{23}$, $\Mat[3,1] = C_{31} - C_{34}$, $\Mat[4,1] = C_{41} - C_{41} = 0$.
Следовательно, проверяются неравенства
$C_{21} - C_{23} > C_{31} - C_{34}$ и $C_{31} - C_{34} > 0$.
Каждое из них удовлетворяет условиям леммы~\ref{lem:1}.

При редуцировании оставшихся трех столбцов ситуация повторяется.
Значение \Sum при редуцировании столбцов не меняется, так как каждый столбец уже содержит нули.

После этого в алгоритме~\ref{alg} выполняется проверка условия $\Sum \ge \Value$. Но $\Value = \infty$.
Поэтому алгоритм переходит к вычислению функции \ChooseArc.

Первым нулевым элементом является $\Mat[1,2]$. 
После этого в строке~\ref{alg:imin} алгоритма~\ref{alg:arc} выполняются сравнения $\infty > \Mat[1,3]$ и $\Mat[1,3] > \Mat[1,4]$. При этом, после предыдущего этапа редукции, имеем $\Mat[1,3] = C_{13} - C_{12}$ и $\Mat[1,4] = C_{14} - C_{12}$. Очевидно, неравенство $C_{13} - C_{12} > C_{14} - C_{12}$ удовлетворяет условиям леммы~\ref{lem:1}.
На этом шаге выполняется присвоение $m \ot C_{14} - C_{12}$.
Далее, в строке~\ref{alg:jmin} алгоритма~\ref{alg:arc} выполняются сравнения $\infty > \Mat[3,2]$ и $\Mat[3,2] > \Mat[4,2]$. При этом $\Mat[3,2] = C_{32} - C_{34}$ и $\Mat[4,2] = C_{42} - C_{41}$.
Условия леммы~\ref{lem:1} снова выполнены.
На этом шаге выполняется присвоение $k \ot C_{42} - C_{41}$.
Далее выполняется сравнение $m + k > -1$ или, что то же самое, $C_{14} - C_{12} + C_{42} - C_{41} > -1$.
Очевидно, это неравенство совместимо с условием $C \in K(\bm{y})$.
В переменную $w$ заносится значение выражения $C_{14} - C_{12} + C_{42} - C_{41}$.

Второй нулевой элемент "--- $\Mat[2,3]$.
Действуя по аналогии, перечислим только нетривиальные сравнения.
Неравенство $\Mat[2,1] \le \Mat[2,4]$ или $C_{21} - C_{23} \le C_{24} - C_{23}$, очевидно, совместимо с условием $C \in K(\bm{y})$.
Неравенство $\Mat[1,3] \le \Mat[4,3]$ тоже совместимо.
Далее, в строке~\ref{alg:mkw} проверяется неравенство $m+k > w$ или, с учетом предыдущих действий,
\[
C_{21} - C_{23} + C_{13} - C_{12} > C_{14} - C_{12} + C_{42} - C_{41}.
\]
Очевидно, оно удовлетворяет условиям леммы~\ref{lem:1}.
После этого шага 
\[
w = C_{21} - C_{23} + C_{13} - C_{12}.
\]

Третий нулевой элемент "--- $\Mat[3,4]$.
Неравенство $\Mat[3,1] < \Mat[3,2]$ или $C_{31} - C_{34} < C_{32} - C_{34}$, очевидно, совместимо с условием $C \in K(\bm{y})$.
Неравенство $\Mat[1,4] < \Mat[2,4]$ тоже совместимо.
Условие $m+k < w$ имеет вид
\[
C_{14} - C_{12} + C_{31} - C_{34} < C_{21} - C_{23} + C_{13} - C_{12}
\]
и тоже совместимо с условием $C \in K(\bm{y})$.

Четвертый нулевой элемент "--- $\Mat[4,1]$.
Легко проверить, что $\Mat[4,2] < \Mat[4,3]$ и $\Mat[3,1] < \Mat[2,1]$ совместимы с условием $C \in K(\bm{y})$.
Условие $m+k < w$ имеет вид
\[
C_{31} - C_{34} + C_{42} - C_{41} < C_{21} - C_{23} + C_{13} - C_{12}
\]
и тоже совместимо.

В данный момент мы все еще находимся в первом экземпляре процедуры \BranchBound.
После описанного выше выполнения функции \ChooseArc выбирается дуга $(i,j) = (2,3)$ (сумма $m+k$ для нее оказалась наибольшей),
из матрицы \Mat вычеркиваются 2-я строка и 3-й столбец, а дуга $(3,2)$ становится запрещенной.
На вход второго экземпляра процедуры \BranchBound подается матрица 
\[
C' \ot
\begin{pmatrix}
  & 0 &   & 1 \\
  &   &   &   \\
1 &   &   & 0 \\
0 & 1 &   &   
\end{pmatrix}
\]
(Пустая строка и пустой столбец оставлены для удобства чтения.)
Ясно, что при её редуцировании ничего нового не происходит, так как каждая строка и каждый столбец содержат нули. 
При вызове функции \ChooseArc в строке~\ref{alg:mkw} выполняются следующие сравнения типа $m+k > w$.
\[
C_{14} - C_{12} + C_{42} - C_{41} > -1.
\]
Очевидно, это неравенство совместимо с условием $C \in K(\bm{y})$.
Далее, выполняется неравенство
\[
C_{31} - C_{34} + C_{14} - C_{12} \le C_{14} - C_{12} + C_{42} - C_{41},
\]
которое удовлетворяет условиям леммы~\ref{lem:1}.
Следующее сравнение
\[
C_{31} - C_{34} + C_{42} - C_{41} \le C_{14} - C_{12} + C_{42} - C_{41}
\]
тоже совместимо с $C \in K(\bm{y})$.

Итак, после вызова функции \ChooseArc во втором экземпляре \BranchBound, выбирается дуга $(1,2)$.
Гамильтонов цикл с дугами $(2,3)$ и $(1,2)$ определяется однозначно.
Выполняется присвоение
\[
\Value \ot C_{12} + C_{23} + C_{34} + C_{41}.
\]
После этого алгоритм переходит к рассмотрению случаев, когда контур содержит дугу $(2,3)$, но не содержит $(1,2)$. Запускается третий экземпляр \BranchBound с матрицей
\[
C'' \ot
\begin{pmatrix}
  &   &   & 1 \\
  &   &   &   \\
1 &   &   & 0 \\
0 & 1 &   &   
\end{pmatrix}
\]
При редуцировании две единицы заменяются нулями. Никакие <<отбрасывающие>> сравнения не выполняются.
Значение переменной \Sum увеличивается на $\Mat[1,4] = C_{14} - C_{12}$ и на $\Mat[4,2] = C_{42} - C_{41}$.
Текущий экземпляр процедуры завершается в строке~\ref{alg:return} после проверки неравенства
$\Sum \ge \Value$:
\[
(C_{14} - C_{12}) + (C_{42} - C_{41}) > 0.
\]
Заметим, что допустимое решение $\bm{y}$ полностью отбраковывается алгоритмом именно на этом шаге (с учетом ранее проверенного неравенства $C_{31} > C_{34}$).
Тем не менее, это неравенство удовлетворяет условиям леммы~\ref{lem:1} и, следовательно, совместно с условием $C \in K(\bm{y})$.

Вместе с третьим экземпляром процедуры \BranchBound завершается и второй её экземпляр.
Алгоритм переходит к выполнению предпоследней строки в первом экземпляре.
В этом экземпляре
\[
\Sum =  C_{12} + C_{23} + C_{34} + C_{41}.
\]
Для разбора случаев, когда контур не содержит дугу $(2,3)$ вызывается четвертый экземпляр процедуры с матрицей
\[
C''' \ot
\begin{pmatrix}
  & 0 & 2 & 1 \\
2 &   &   & 2 \\
1 & 2 &   & 0 \\
0 & 1 & 2 &   
\end{pmatrix}
\]
При редуцировании второй строки выполняется сравнение $\Mat[2,1] \le \Mat[2,4]$.
При редуцировании третьего столбца "--- $\Mat[1,3] \le \Mat[4,3]$.
Очевидно, ни то ни другое не отбрасывают целиком конус $K(\bm{y})$.
Значение \Sum увеличивается на $(C_{21} - C_{23}) + (C_{13} - C_{12})$.

И, наконец, сравнение $\Sum \ge \Value$ завершает этот четвертый экземпляр процедуры и вообще весь алгоритм. Это сравнение имеет вид
\[
(C_{21} - C_{23}) + (C_{13} - C_{12}) \ge 0
\]
и тоже совместимо с условием $C \in K(\bm{y})$.

Итак, условие (*) из определения~\ref{def:direct-type2} не выполнено для этого алгоритма.

%
%

\medskip

Лаборатория <<Дискретная и вычислительная геометрия>>, ЯрГУ им. П.Г. Демидова, ул.~Советская 14, Ярославль, 150000. E-mail: \verb|maximenko.a.n@gmail.com|


\begin{thebibliography}{9}

\bibitem{Bondarenko:2015}
Бондаренко В. А., Николаев А. В., Шовгенов Д. А. 
Полиэдральные графы задач об остовных деревьях при дополнительных ограничениях 
//~Моделирование и анализ информационных систем. 2015, 22(4), 453--463.

\bibitem{Nikolaev:2016}
Bondarenko V., Nikolaev A. 
On graphs of the cone decompositions for the min-cut and max-cut problems 
//~International Journal of Mathematics and Mathematical Sciences. 2016, Article ID 7863650.

\bibitem{Nikolaev:2017}
Bondarenko V., Nikolaev A. 
Some properties of the skeleton of the pyramidal tours polytope 
//~Electronic Notes in Discrete Mathematics. 2017, 61, 131--137.

\bibitem{Bondarenko:2017}
Бондаренко В. А., Николаев А. В., Шовгенов Д. А. 
Полиэдральные характеристики задач о сбалансированном и несбалансированном двудольных подграфах 
//~Моделирование и анализ информационных систем. 2017, 24(2), 141--154.

\bibitem{Bondarenko:2018}
Bondarenko V. A., Nikolaev A. V. 
On the skeleton of the polytope of pyramidal tours 
//~Journal of Applied and Industrial Mathematics. 2018, 12(1), 9--18.

\bibitem{Bondarenko:1983}
Бондаренко В. А. 
Неполиномиальная нижняя оценка сложности задачи коммивояжера в одном классе алгоритмов 
//~Автоматика и телемеханика. 1983, 9, 45--50.

\bibitem{Bond:1993}
Бондаренко В. А.
Геометрические методы системного анализа в комбинаторной оптимизации:
дисс. на соискание уч. ст. д. ф.-м. н. Ярославль, 1993.

\bibitem{BondBook:2008}
Бондаренко В.А., Максименко А.Н. 
Геометрические конструкции и сложность в комбинаторной оптимизации. М.: URSS, 2008.

\bibitem{Maksimenko:2014}
Максименко А. Н. 
Характеристики сложности: кликовое число графа многогранника и число прямоугольного покрытия 
//~Моделирование и анализ информационных систем. 2014, 21(5), 116--130.

\bibitem{Little:1963}
Little J.D.C., Murty K.G., Sweeney D.W., Karel C.
An algorithm for the traveling salesman problem 
// Operations research. 1963, 11(6), 972--989.

\bibitem{Reingold:1980}
Рейнгольд Э. М., Нивергельт Ю., Део Н. 
Комбинаторные алгоритмы. Теория и практика: Пер. с англ. М.: Мир, 1980.

\bibitem{Padberg:1974}
Padberg M. W., Rao M. R. 
The travelling salesman problem and a class of polyhedra of diameter two 
//~Math. Program. 1974, 7(1), 32--45.
\end{thebibliography}
\end{document}